\newcommand{\argmin}{\operatornamewithlimits{arg \, min}}
\newtheorem{theorem}{Theorem}
\newtheorem{proof}{Proof}
\begin{document}

\title{Neighbor2Neighbor: Self-Supervised Denoising from Single Noisy Images}




\renewcommand{\thefootnote}{\fnsymbol{footnote}}

\addtocounter{footnote}{0}
\author{Tao Huang$^{1, 2}$\footnotemark[1], \ Songjiang Li$^{2}$, \ Xu Jia$^{2, 3}$\footnotemark[2], \ Huchuan Lu$^{3}$, \ Jianzhuang Liu$^{2}$ \\
	 {\small $^{1}$Renmin University of China, $^{2}$Noah's Ark Lab, Huawei Technologies, $^{3}$Dalian University of Technology}\\
	{\tt\small tao.huang2018@ruc.edu.cn} 
	\quad {\tt\small jiayushenyang@gmail.com} \\
	{\tt\small lhchuan@dlut.edu.cn} \quad 
	{\tt\small \{songjiang.li, liu.jianzhuang\}@huawei.com}
}

\maketitle

\footnotetext[1]{The work was done in Noah's Ark Lab, Huawei Technologies.}
\footnotetext[2]{Corresponding author}

\renewcommand*{\thefootnote}{\arabic{footnote}}

\setlength{\abovedisplayskip}{5pt}
\setlength{\belowdisplayskip}{0pt}
\setlength{\abovedisplayshortskip}{0pt}
\setlength{\belowdisplayshortskip}{0pt}

\begin{abstract}
	\vspace{-5pt}
	In the last few years, image denoising has benefited a lot from the fast development of neural networks.
	However, the requirement of large amounts of noisy-clean image pairs for supervision limits the wide use of these models.
	Although there have been a few attempts in training an image denoising model with only single noisy images, existing self-supervised denoising approaches suffer from inefficient network training, loss of useful information, or dependence on noise modeling.
	In this paper, we present a very simple yet effective method named Neighbor2Neighbor to train an effective image denoising model with only noisy images.
	Firstly, a random neighbor sub-sampler is proposed for the generation of training image pairs. In detail, input and target used to train a network are images sub-sampled from the same noisy image, satisfying the requirement that paired pixels of paired images are neighbors and have very similar appearance with each other.
	Secondly, a denoising network is trained on sub-sampled training pairs generated in the first stage, with a proposed regularizer as additional loss for better performance.
	The proposed Neighbor2Neighbor framework is able to enjoy the progress of state-of-the-art supervised denoising networks in network architecture design. Moreover, it avoids heavy dependence on the assumption of the noise distribution.
	We explain our approach from a theoretical perspective and further validate it through extensive experiments, including synthetic experiments with different noise distributions in sRGB space and real-world experiments on a denoising benchmark dataset in raw-RGB space.
\end{abstract}
\vspace{-10pt}

\section{Introduction}

Image denoising is a low-level vision task that is fundamental in computer vision, since noise contamination degrades the visual quality of collected images and may adversely affect subsequent image analysis and processing tasks, such as classification and semantic segmentation \cite{liu2018image}.
Traditional image denoising methods such as BM3D \cite{dabov2007image}, NLM \cite{buades2005non}, and WNNM \cite{gu2014weighted}, use local or non-local structures of an input noisy image. These methods are non-learning-based without the need for ground-truth images.
Recently, convolutional neural networks (CNNs) provide us with powerful tools for image denoising. Numerous CNN-based image denoisers, e.g., DnCNN \cite{zhang2017beyond}, U-Net \cite{geronneberr2015u}, RED \cite{mao2016image}, MemNet \cite{tai2017memnet}, and SGN \cite{gu2019self}, have superior performance over traditional denoisers.
However, CNN-based denoisers depend heavily on a large number of noisy-clean image pairs for training. Unfortunately, collecting large amounts of aligned pairwise noisy-clean training data is extremely challenging and expensive in real-world photography.
Additionally, models trained with synthetic noisy-clean image pairs degrade greatly due to the domain gap between synthetic and real noise.

To mitigate this problem, a series of unsupervised and self-supervised methods that do not require any clean images for training are proposed. These methods require
1) training the network with multiple independent noisy observations per scene \cite{lehtinen2018noise2noise},
2) designing specific blind-spot network structures to learn self-supervised models on only single noisy images \cite{krull2019noise2void, laine2019high, wu2020unpaired}, and making further improvements by using noise models, e.g., Gaussian-Poisson models \cite{laine2019high, wu2020unpaired}, or
3) training the network with noisier-noisy pairs, where the noisier image is derived from the noisy one with synthetic noise added \cite{xu2020noisy, moran2020noisier2noise}.
However, these requirements are not practical in real-world denoising scenarios. 
Firstly, capturing multiple noisy observations per scene remains very challenging, especially for motion scenarios or medical imaging.
Secondly, the relatively low accuracy and heavy computational burden of blind-spot networks greatly limit the application.
Moreover, self-supervised methods with noise model assumptions may work well in synthetic experiments when the noise distribution is known as a prior.
However, these methods degrade sharply when dealing with real-world noisy images where the noise distribution remains unknown.

In this work, we propose Neighbor2Neighbor, a novel self-supervised image denoising framework that overcomes the limitations above.
Our approach consists of a training image pairs generation strategy based on sub-sampling and a self-supervised training scheme with a regularization term.
Specifically, training input and target are generated by random neighbor sub-samplers, where two sub-sampled paired images are extracted from a single noisy image with each element on the same position of the two images being neighbors in the original noisy image.
In this way, if we assume that noise with each pixel is independent conditioned on its pixel value and there is no correlation between noise in different positions, then these two sub-sampled paired noisy images are independent given the ground-truth of the original noisy image. 
Accordingly, inspired by Noise2Noise \cite{lehtinen2018noise2noise}, we use the above training pairs to train a denoising network.
Besides, we develop a regularization term to address the essential difference of pixel ground-truth values between neighbors on the original noisy image.
The proposed self-supervised framework aims at training denoising networks with only single images available, without any modifications to the network structure. Any network that performs well in supervised image denoising tasks can be used in our framework. Moreover, our method does not depend on any noise models either.

To evaluate the proposed Neighbor2Neighbor, a series of experiments on both synthetic and real-world noisy images are conducted. 
The extensive experiments show that our Neighbor2Neighbor outperforms traditional denoisers and existing self-supervised denoising methods learned from only single noisy images. 
The results demonstrate the effectiveness and superiority of the proposed method.

The main contributions of our paper are as follows:
\setlist{nolistsep}
\begin{enumerate}[noitemsep]
	\item We propose a novel self-supervised framework for image denoising, in which any existing denoising networks can be trained without any clean targets, network modifications, or noise model assumptions.
	\item From the theoretical perspective, we provide a sound motivation for the proposed framework. 
	\item Our method performs very favorably against state-of-the-art self-supervised denoising methods especially on real-world datasets, which shows its potential applications in real-world scenarios.
\end{enumerate}

\section{Related Work} \label{sec::relatedworks}

\subsection{Supervised Image Denoising}

In the last few years, image denoising based on deep neural networks has been developed rapidly. 
Zhang et al. \cite{zhang2017beyond} proposed DnCNN that combines the convolutional neural network and residual learning for image denoising.
With the supervision of noisy-clean paired images, DnCNN outperforms traditional image denoisers by a large margin.
After that, numerous denoising networks are proposed to further improve the performance \cite{mao2016image, tai2017memnet, zhang2018ffdnet, lefkimmiatis2018universal, plotz2018neural, guo2019toward, gu2019self}.
Nevertheless, these deep denoisers need large amounts of aligned noisy-clean image pairs for training. It is challenging and expensive to collect plenty of training pairs for supervised denoising.
This limits the use of supervised denoisers.

\subsection{Image Denoising with Only Noisy Images}

Image denoising methods using only noisy images can be categorized into two groups: traditional denoisers and deep denoisers.
Traditional denoisers include BM3D \cite{dabov2007image}, NLM \cite{buades2005non}, and WNNM \cite{gu2014weighted}. 
For deep denoisers trained with only a single noisy image, Ulyanov et al. \cite{ulyanov2018deep} proposed deep image prior (DIP), where the image prior is captured from the CNN network rather than specially designed; Self2Self \cite{quan2020self2self} and Noisy-as-Clean \cite{xu2020noisy} are recent works.

Lehtinen et al. \cite{lehtinen2018noise2noise} introduced Noise2Noise to train a deep denoiser with multiple noisy observations of the same scenes.
Subsequently, self-supervised denoising models, including Noise2Void \cite{krull2019noise2void} and Noise2Self \cite{batson2019noise2self}, were proposed to train the networks only with one noisy observation per scene.
Specifically, the carefully designed blind-spot\footnote{We follow the meaning of Krull et al. \cite{krull2019noise2void} that the network prediction for a pixel depends on all input pixels except for the input pixel at its very location.} networks are used to avoid learning the identity transformation. 
Recently, Probabilistic Noise2Void \cite{krull2019probabilistic}, Laine19 \cite{laine2019high}, and Dilated Blind-Spot Network \cite{wu2020unpaired} further introduced explicit noise modeling and probabilistic inference for better performance. Masked convolution \cite{laine2019high} and stacked dilated convolution layers \cite{wu2020unpaired} were introduced for faster training.
Different from blind-spot-based self-supervised methods, in Noisier2Noise \cite{moran2020noisier2noise}, training pairs are prepared by generating synthetic noises from a noise model and adding them to single noisy images.
However, the noise model is hard to specify, especially in real-world scenarios.
Noisy-as-Clean \cite{xu2020noisy} mentioned above shares similar philosophy.

Additionally, Soltanayev and Chun \cite{soltanayev2018training} used Stein's unbiased risk estimator (SURE) to train AWGN denoising models on single noisy images, and Zhussip et al. \cite{zhussip2019extending} extended it to the case of correlated pairs of noisy images. Cha and Moon \cite{cha2019fully} used SURE to fine-tune a supervised denoiser for each test noisy image.
However, SURE-based algorithms are only designed for Gaussian additive noise, and the noise level is required to be known as a prior.

\section{Motivation}\label{sec::Motivation}

In Section \ref{sec::Motivation}, we describe the theoretical framework that motivates our proposed method in Section \ref{sec::neighbor2neighbor}.
The summary of Section \ref{sec::Motivation} is as follows:
In Section \ref{subsec::n2n_revisit}, we revisit the related theory proposed in Noise2Noise, which proves that paired noisy images taken from the same scene can also be used to train denoising models.
Before we extend this theory to the case when only single noisy observation is available, we discuss the case of paired noisy images with slightly different ground-truths in Section \ref{subsec::n2n_similar_gt}, which is very useful for the extension to single noisy images.
Then in Section \ref{subsec::n2n_extension}, we mathematically formulate the underlying theory of training denoising networks where training pairs are generated using image pair samplers, and we further propose a regularizer to solve the problem caused by non-zero $\bm{\varepsilon}$ which is discussed in Section \ref{subsec::n2n_similar_gt}.

\subsection{Noise2Noise Revisit} \label{subsec::n2n_revisit}

Noise2Noise \cite{lehtinen2018noise2noise} is a denoising method trained without the need for ground-truth clean images. This method only requires pairs of independent noisy images of the same scene.
Given two independent noisy observations named $\mathbf{y}$ and $\mathbf{z}$ of the same ground-truth image $\mathbf{x}$, Noise2Noise tries to minimize the following loss in terms of $\theta$,
\begin{align} \label{equ::n2n}
	\underset{\theta}{\arg\min} ~ \mathbb{E}_{\mathbf{x}, \mathbf{y}, \mathbf{z}} \left\lVert f_\theta(\mathbf{y}) - \mathbf{z}\right\rVert_2^2,
\end{align}
where $f_\theta$ is the denoising network parameterized by $\theta$.
Minimizing Equation \eqref{equ::n2n} yields the same solution as the supervised training with the $\ell_2$-loss.
For detailed discussions, refer to Section 2 of \cite{lehtinen2018noise2noise} and Section 3.1 of \cite{zhussip2019extending}.

\subsection{Paired Images with Similar Ground Truths} \label{subsec::n2n_similar_gt}

Noise2Noise \cite{lehtinen2018noise2noise} mitigates the need of clean images.
However, capturing multiple noisy observations of a scene remains a very challenging problem.
The ground-truths of two noisy observations are difficult to be the same due to occlusion, motion, and lighting variation.
Thus, we propose to extend the Equation \eqref{equ::n2n} to the case where the gap between the underlying clean images $\bm{\varepsilon} := \mathbb{E}_{\mathbf{z}|\mathbf{x}}(\mathbf{z}) - \mathbb{E}_{\mathbf{y}|\mathbf{x}}(\mathbf{y}) \neq \mathbf{0}$.
\begin{theorem}\label{theorem::epsilon}
	Let $\mathbf{y}$ and $\mathbf{z}$ be two independent noisy images conditioned on $\mathbf{x}$, and assume that there exists an $\bm{\varepsilon} \neq \mathbf{0}$ such that $\mathbb{E}_{\mathbf{y}|\mathbf{x}}(\mathbf{y})=\mathbf{x}$ and $\mathbb{E}_{\mathbf{z}|\mathbf{x}}(\mathbf{z}) = \mathbf{x}+\bm{\varepsilon}$. 
	Let the variance of $\mathbf{z}$ be $\bm{\sigma}_\mathbf{z}^2$.
	Then it holds that
	\begin{equation} \label{equ::n2n_epsilon}
		\begin{aligned}
			\mathbb{E}_{\mathbf{x}, \mathbf{y}} \left\lVert f_{\theta}(\mathbf{y})-\mathbf{x} \right\rVert_2^2
			&= \mathbb{E}_{\mathbf{x}, \mathbf{y}, \mathbf{z}} \left\lVert f_{\theta}(\mathbf{y})-\mathbf{z} \right\rVert_2^2-\bm{\sigma}_\mathbf{z}^2 \\
			&+ 2\bm{\varepsilon} \mathbb{E}_{\mathbf{x}, \mathbf{y}}(f_{\theta}(\mathbf{y})-\mathbf{x}).  			
		\end{aligned} 
	\end{equation}
\end{theorem}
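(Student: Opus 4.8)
The plan is to prove Theorem~\ref{theorem::epsilon} by the same bias--variance style expansion used in Noise2Noise, but carrying along the first-order term that the ground-truth gap $\bm{\varepsilon}$ produces. Concretely, I would expand the squared error inside the right-hand side of \eqref{equ::n2n_epsilon} around the clean image $\mathbf{x}$, take expectations via the law of total expectation while conditioning on $\mathbf{x}$, and use the hypothesis that $\mathbf{y}$ and $\mathbf{z}$ are independent conditioned on $\mathbf{x}$ so that the cross term factorizes. It is also worth fixing a convention at the outset: I would read $\bm{\varepsilon}$ as pulled outside $\mathbb{E}_{\mathbf{x}}$ exactly as written in the statement, i.e.\ treated as (effectively) independent of $\mathbf{x}$, which is the regime of interest for the neighbor sub-sampler of Section~\ref{sec::neighbor2neighbor} where $\bm{\varepsilon}$ is tiny.

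The steps, in order: (1) Apply the identity $\left\lVert f_\theta(\mathbf{y})-\mathbf{z}\right\rVert_2^2 = \left\lVert f_\theta(\mathbf{y})-\mathbf{x}\right\rVert_2^2 - 2(f_\theta(\mathbf{y})-\mathbf{x})^\top(\mathbf{z}-\mathbf{x}) + \left\lVert \mathbf{z}-\mathbf{x}\right\rVert_2^2$ and take $\mathbb{E}_{\mathbf{x},\mathbf{y},\mathbf{z}}$ of both sides. (2) For the cross term, condition on $\mathbf{x}$; since $f_\theta(\mathbf{y})$ depends only on $\mathbf{y}$ and $\mathbf{y}$ is independent of $\mathbf{z}$ given $\mathbf{x}$, the conditional expectation splits as $\mathbb{E}_{\mathbf{y}|\mathbf{x}}(f_\theta(\mathbf{y})-\mathbf{x})^\top \mathbb{E}_{\mathbf{z}|\mathbf{x}}(\mathbf{z}-\mathbf{x})$, and the second factor is $\bm{\varepsilon}$ by assumption; taking the outer $\mathbb{E}_{\mathbf{x}}$ then yields the $2\bm{\varepsilon}\,\mathbb{E}_{\mathbf{x},\mathbf{y}}(f_\theta(\mathbf{y})-\mathbf{x})$ contribution. (3) For the last term, $\mathbb{E}_{\mathbf{x},\mathbf{z}}\left\lVert\mathbf{z}-\mathbf{x}\right\rVert_2^2$ is the second moment of the noise in $\mathbf{z}$ about the clean signal, which is identified with $\bm{\sigma}_\mathbf{z}^2$; equivalently, splitting $\mathbf{z}-\mathbf{x}=(\mathbf{z}-\mathbf{x}-\bm{\varepsilon})+\bm{\varepsilon}$ makes the cross term vanish and shows this equals the variance of $\mathbf{z}$ up to the $\bm{\varepsilon}^2$ correction that the statement absorbs into $\bm{\sigma}_\mathbf{z}^2$. (4) Rearrange, moving $\left\lVert f_\theta(\mathbf{y})-\mathbf{x}\right\rVert_2^2$ to one side and the remaining terms to the other, to obtain Equation~\eqref{equ::n2n_epsilon}.

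The calculation is routine; the step that needs care is the cross term. Factoring its conditional expectation is exactly where the hypothesis ``$\mathbf{y}$ and $\mathbf{z}$ independent conditioned on $\mathbf{x}$'' is essential, so I would state this factorization explicitly rather than treat it as obvious. A secondary point of bookkeeping is the treatment of $\bm{\varepsilon}$ and of the word ``variance'': the statement writes $\bm{\varepsilon}$ and $\bm{\sigma}_\mathbf{z}^2$ outside the $\mathbf{x}$-expectation, which is exact when $\bm{\varepsilon}$ is constant in $\mathbf{x}$ and is otherwise to be understood in the per-pixel/averaged sense appropriate to the small-$\bm{\varepsilon}$ setting motivating the regularizer; I would flag this convention at the start of the proof. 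Beyond these caveats there is no genuine obstacle --- the identity is the Noise2Noise risk equivalence together with the tracked $O(\bm{\varepsilon})$ term, and it reduces to the original Noise2Noise statement when $\bm{\varepsilon}=\mathbf{0}$.
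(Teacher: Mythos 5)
Your proposal is correct and is essentially the paper's own argument: both expand the quadratic around $\mathbf{x}$, use the conditional independence of $\mathbf{y}$ and $\mathbf{z}$ given $\mathbf{x}$ to factor the cross term into $\mathbb{E}_{\mathbf{y}|\mathbf{x}}(f_\theta(\mathbf{y})-\mathbf{x})^\top\bm{\varepsilon}$, and identify $\mathbb{E}_{\mathbf{z}|\mathbf{x}}\lVert\mathbf{z}-\mathbf{x}\rVert_2^2$ with $\bm{\sigma}_\mathbf{z}^2$ before taking the outer expectation over $\mathbf{x}$. The only difference is cosmetic --- you expand $\lVert f_\theta(\mathbf{y})-\mathbf{z}\rVert_2^2$ directly as $\lVert(f_\theta(\mathbf{y})-\mathbf{x})-(\mathbf{z}-\mathbf{x})\rVert_2^2$, which yields the cross term in its final form in one step, whereas the paper starts from $\lVert f_\theta(\mathbf{y})-\mathbf{x}\rVert_2^2$ and needs an extra re-decomposition of the cross term; your caveats about pulling $\bm{\varepsilon}$ and $\bm{\sigma}_\mathbf{z}^2$ outside $\mathbb{E}_{\mathbf{x}}$ and about the $\lVert\bm{\varepsilon}\rVert^2$ discrepancy in the word ``variance'' are both points the paper glosses over.
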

The proof is given in the supplementary material.
Theorem \ref{theorem::epsilon} states that when the gap $\bm{\varepsilon} \neq \mathbf{0}$, since $ \mathbb{E}_{\mathbf{x}, \mathbf{y}}(f_{\theta}(\mathbf{y})-\mathbf{x}) \not\equiv 0$,  optimizing $\mathbb{E}_{\mathbf{x}, \mathbf{y}, \mathbf{z}}\|f_{\theta}(\mathbf{y})-\mathbf{z}\|_2^2$ does not yield the same solution as the supervised training loss $\mathbb{E}_{\mathbf{x}, \mathbf{y}}\|f_{\theta}(\mathbf{y})-\mathbf{x}\|_2^2$.
Fortunately, if $\bm{\varepsilon} \to \mathbf{0}$, which means the gap is sufficiently small, $2\bm{\varepsilon} \mathbb{E}_{\mathbf{x}, \mathbf{y}}(f_{\theta}(\mathbf{y})-\mathbf{x}) \to \mathbf{0}$, so the network trained with noisy image pair $(\mathbf{y}, \mathbf{z})$ works as a reasonable approximate solution to the supervised training network.
Note that when $\bm{\varepsilon} = \mathbf{0}$, since $\bm{\sigma}_\mathbf{z}^2$ is a constant, minimizing both sides of Equation \eqref{equ::n2n_epsilon} results in $\argmin_{\theta} \ \mathbb{E}_{\mathbf{x},\mathbf{y},\mathbf{z}} \left\lVert f_\theta(\mathbf{y}) - \mathbf{z} \right\rVert_2^2$
, which is the objective of Noise2Noise.

\subsection{Extension to Single Noisy Images} \label{subsec::n2n_extension}

Inspired by Noise2Noise where training pairs are independent noisy image pairs of the same scene, we go a step further and propose to generate independent training pairs from single noisy images $\mathbf{y}$ by sampling.

To be specific, an image pair sampler $G = (g_1, g_2)$ is used to generate a noisy image pair $(g_1(\mathbf{y}), g_2(\mathbf{y}))$ from a single noisy image $\mathbf{y}$.
The contents of two sampled images $(g_1(\mathbf{y}), g_2(\mathbf{y}))$ are closely resembled.
Similar to Equation \eqref{equ::n2n}, we try to adopt the sampled image pair as two noisy observations, which becomes:
\begin{align} \label{equ::pn2n}
	\argmin_{\theta} ~ \mathbb{E}_{\mathbf{x}, \mathbf{y}} \left\lVert f_\theta(g_1(\mathbf{y})) - g_2(\mathbf{y}) \right\rVert^2.
\end{align}
Different from Noise2Noise, the ground-truths of two sampled noisy images $(g_1(\mathbf{y}), g_2(\mathbf{y}))$ differ, i.e., $\bm{\varepsilon} = \mathbb{E}_{\mathbf{y}|\mathbf{x}}(g_2(\mathbf{y})) - \mathbb{E}_{\mathbf{y}|\mathbf{x}}(g_1(\mathbf{y})) \neq \mathbf{0}$.
According to Theorem \ref{theorem::epsilon}, directly applying Equation \eqref{equ::pn2n} is not appropriate and leads to over-smoothing.
Thus, we consider the non-zero gap $\bm{\varepsilon}$.

Considering the optimal (ideal) denoiser $f_\theta^*$ that is trained with clean images and the $\ell_2$-loss, given $\mathbf{x}$, it satisfies that $f_\theta^*(\mathbf{y}) = \mathbf{x}$ and $f_\theta^*(g_\ell(\mathbf{y})) = g_\ell(\mathbf{x})$, for $\ell \in \{1, 2\}$.
Thus, the following holds with the optimal network $f_\theta^*$: 

\vspace{-12pt}
\begin{small}
	\begin{align}  
		& \mathbb{E}_{\mathbf{y}|\mathbf{x}} ~ \{ f_{\theta}^*(g_1(\mathbf{y}))-g_2(\mathbf{y})-\left(g_1(f_{\theta}^*(\mathbf{y}))-g_2(f_{\theta}^*(\mathbf{y}))\right) \}
		\label{equ::constraint1} \\
		& = g_1(\mathbf{x}) - \mathbb{E}_{\mathbf{y}|\mathbf{x}} \{g_2(\mathbf{y})\} - (g_1(\mathbf{x}) - g_2(\mathbf{x}))
		\nonumber \\ 
		& = g_2(\mathbf{x}) - \mathbb{E}_{\mathbf{y}|\mathbf{x}} \{g_2(\mathbf{y})\} = 0.    \nonumber 
	\end{align}	
\end{small} 

\vspace{-15pt}
{
	\noindent With the last two terms in Equation \eqref{equ::constraint1}, we consider the gap between the ground truths of the training image pair.
	If the gap is zero, the subtraction of the last two terms in Equation \eqref{equ::constraint1} vanishes, and Equation \eqref{equ::constraint1} becomes a special case of Noise2Noise paired training in Equation \eqref{equ::n2n}.
	However, if the gap is non-zero, these two terms serve as a correction of the ground truth gap between the first two terms in Equation \eqref{equ::constraint1}, 
	forcing \eqref{equ::constraint1} to be zero.
}

Therefore, Equation \eqref{equ::constraint1} provides a constraint that is satisfied when a denoiser $f_\theta$ is the ideal one $f_\theta^*$.
To exploit this (ideal) constraint, rather than directly optimizing Equation \eqref{equ::pn2n}, we consider the following constrained optimization problem:

\vspace{-10pt}
\small
\begin{equation}\label{eq::regular_opt}
	\begin{aligned}
		&\min_{\theta} ~ \mathbb{E}_{\mathbf{y}|\mathbf{x}} \left\lVert f_{\theta}(g_1(\mathbf{y})) - g_2(\mathbf{y}) \right\rVert_2^2, \ \text{ s.t.}  \\
		& \mathbb{E}_{\mathbf{y}|\mathbf{x}} \{ f_{\theta}(g_1(\mathbf{y})) - g_2(\mathbf{y}) -  g_1(f_{\theta}(\mathbf{y})) + g_2(f_{\theta}(\mathbf{y})) \} = 0.		
	\end{aligned}
\end{equation}
\normalsize

{
	\noindent
	With the equation $\mathbb{E}_{\mathbf{x}, \mathbf{y}} = \mathbb{E}_{\mathbf{x}} \mathbb{E}_{\mathbf{y}| \mathbf{x}}$, we further reformulate it as the following regularized optimization problem:
}

\vspace{-9pt}
\small
\begin{equation}
	\begin{aligned}
		&\min_{\theta} ~ \mathbb{E}_{\mathbf{x}, \mathbf{y}} \left\lVert f_{\theta}(g_1(\mathbf{y})) - g_2(\mathbf{y}) \right\rVert_2^2 \\
		+& \gamma \mathbb{E}_{\mathbf{x}, \mathbf{y}}  \left\lVert f_{\theta}(g_1(\mathbf{y})) - g_2(\mathbf{y}) - g_1(f_{\theta}(\mathbf{y})) + g_2(f_{\theta}(\mathbf{y})) \right\rVert_2^2.		
	\end{aligned}
\end{equation}
\normalsize

\begin{figure*}
	\centering
	\captionsetup[subfigure]{}
	\subfloat[][Training] { 
		\includegraphics[height=0.37\linewidth]{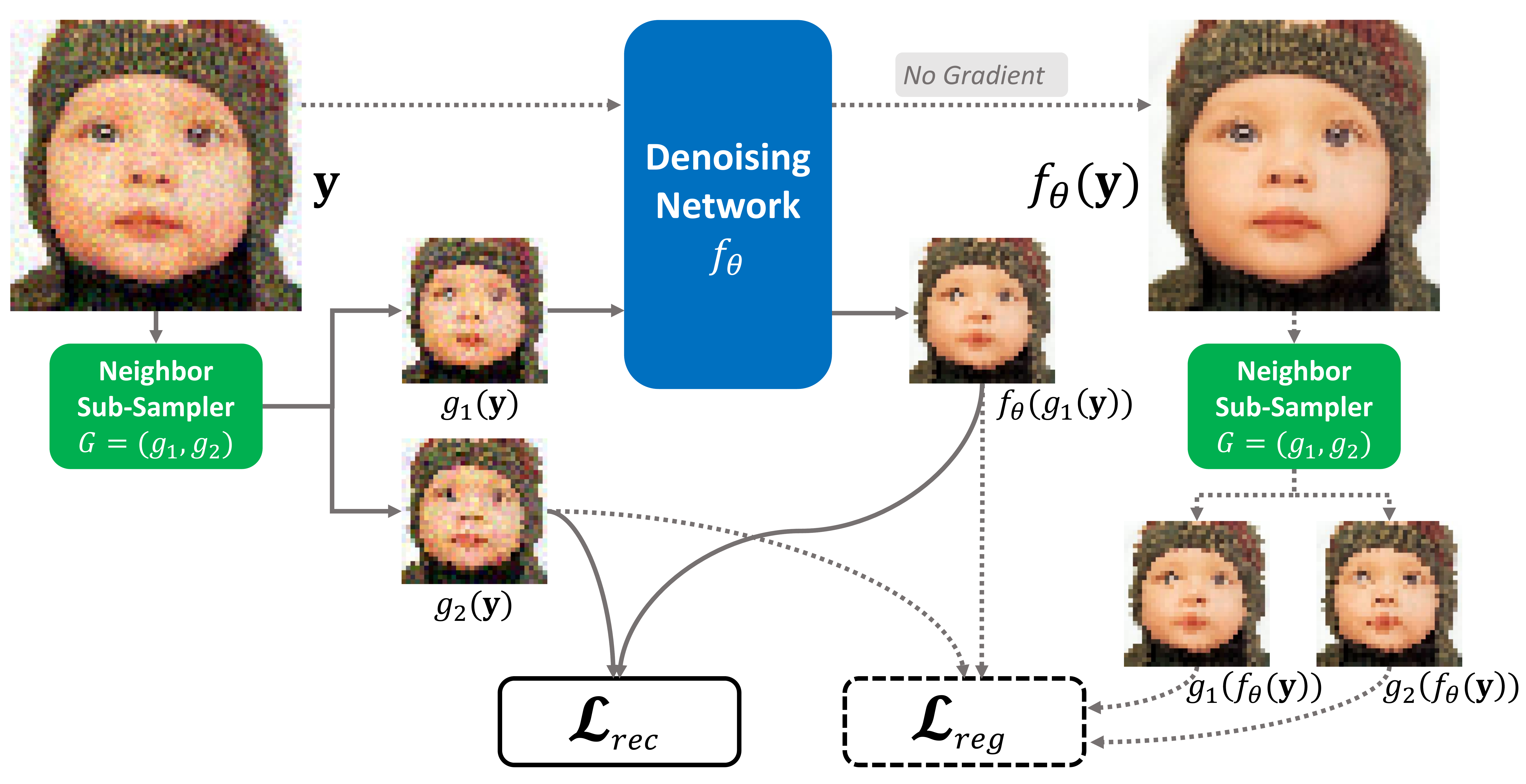} 
		\label{framework:train}
	}
	\subfloat[][Inference] {
		\includegraphics[height=0.37\linewidth]{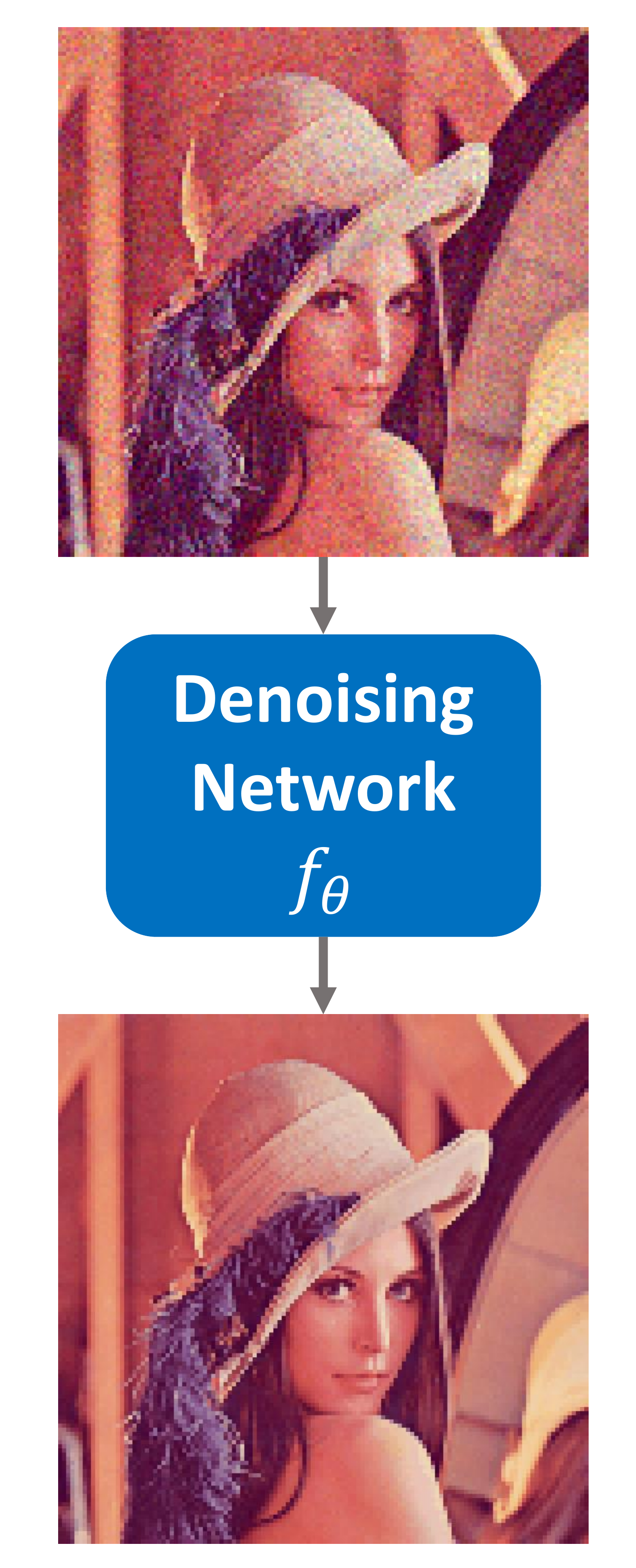}
		\label{framework:test}
	}
	\vspace{-6pt}
	\caption{Overview of our proposed Neighbor2Neighbor framework.
		(a) Complete view of the training scheme.
		A pair of sub-sampled images $(g_1(\mathbf{y}), g_2(\mathbf{y}))$ are generated from a noisy image $\mathbf{y}$ with a neighbor sub-sampler $G=(g_1, g_2)$.	
		The denoising network takes $g_1(\mathbf{y})$ and $g_2(\mathbf{y})$ as input and target respectively.
		The regularized loss $\mathcal{L}$ includes the following two terms:
		On the left side, the reconstruction term $\mathcal{L}_{rec}$ is computed between the network output and the noisy target.
		On the right side, the regularization term $\mathcal{L}_{reg}$ is further added, considering the essential difference of the ground-truth pixel values between the sub-sampled noisy image pair.
		It should be mentioned that the neighbor sub-sampler $G$ (\textcolor{green}{green}) that appears twice represents the same neighbor sub-sampler.
		(b) Inference using the trained denoising network.
		Best viewed in color.
	}
	\label{fig:framework}
\end{figure*}

\section{Proposed Method} \label{sec::neighbor2neighbor}

In this section, based on our motivation in Section \ref{sec::Motivation}, we propose Neighbor2Neighbor, a self-supervised framework to train CNN denoisers from single observation of noisy images. 
The proposed training scheme consists of two parts.
The first is to generate pairs of noisy images by using random \textit{neighbor sub-samplers}. 
For the second part, while the sub-sampled image pairs are used for self-supervised training, we further introduce a regularized loss to address the non-zero ground-truth gap between the paired sub-sampled noisy images.
The regularized loss consists of a reconstruction term and a regularization term.
An overview of our proposed Neighbor2Neighbor framework including training and inference is shown in Figure \ref{fig:framework}.

\subsection{Generation of Training Image Pairs} \label{subsec::noisypairgenerate}

Firstly, we introduce a neighbor sub-sampler to generate noisy image pairs $(g_1(\mathbf{y}), g_2(\mathbf{y}))$ from single noisy images $\mathbf{y}$ for training, satisfying the following assumptions discussed in Section \ref{subsec::n2n_extension}:
1) the sub-sampled paired noisy images $(g_1(\mathbf{y}), g_2(\mathbf{y}))$ are conditionally independent given the ground-truth $\mathbf{x}$ of $\mathbf{y}$;
2) the gap between the underlying ground-truth images of $g_1(\mathbf{y})$ and $g_2(\mathbf{y})$ is small.

\begin{figure}[t]
	\begin{center}
		\includegraphics[width=0.9\linewidth]{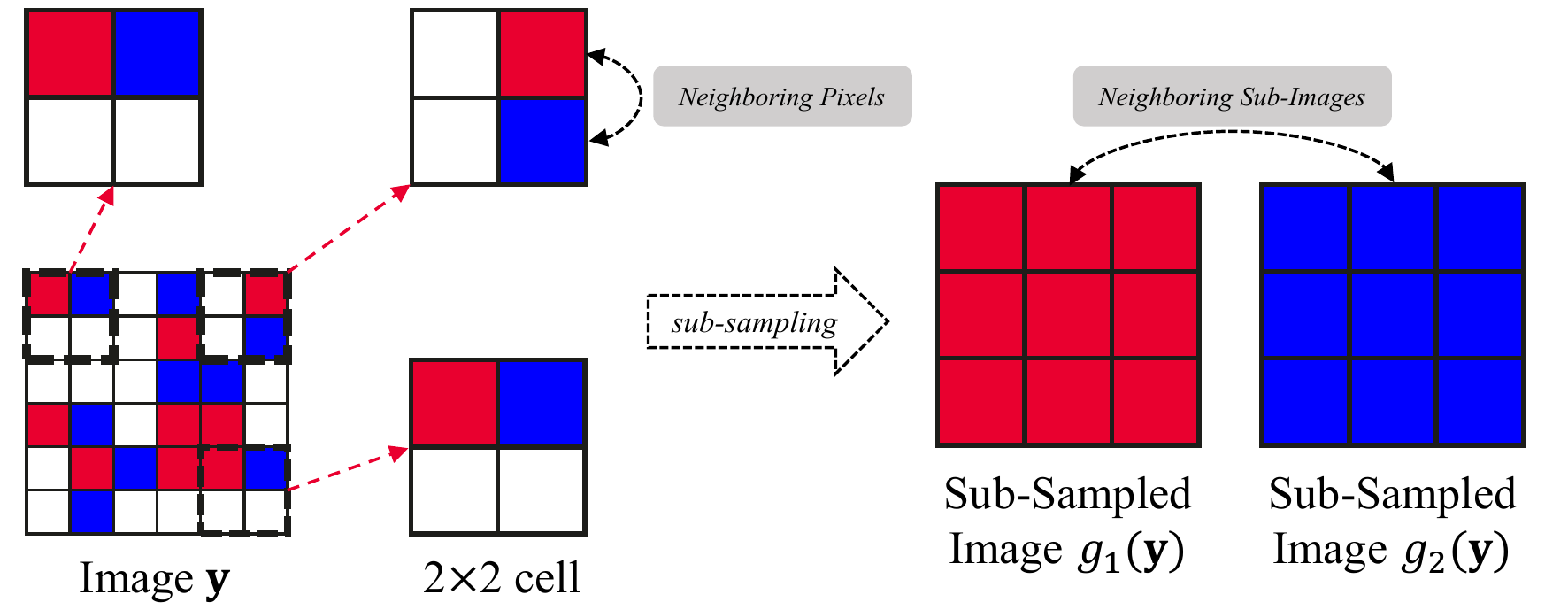}
	\end{center}
	\vspace{-15pt}
	\caption{Example of generating an image pair with a neighbor sub-sampler $G=(g_1, g_2)$. 
		Here, $k = 2$ and in each $2 \times 2$ cell, two neighboring pixels are randomly chosen,  filled in \textcolor{red}{red} and \textcolor{blue}{blue} respectively.
		The pixel filled in \textcolor{red}{red} is taken as a pixel of a sub-sampled image $g_1(\mathbf{y})$, and the other pixel filled in \textcolor{blue}{blue} is taken as a pixel of another sub-sampled image $g_2(\mathbf{y})$.
		The sub-sampled paired images $(g_1(\mathbf{y}), g_2(\mathbf{y}))$ are shown as the red patch and the blue patch on the right.
		Best viewed in color.}
	\vspace{-10pt}
	\label{fig:downsampling}
\end{figure}

The diagram of an image pair generation with a neighbor sub-sampler is shown in Figure \ref{fig:downsampling}.
Denote an image as $\mathbf{y}$ with width $W$ and height $H$.
The details of a neighbor sub-sampler $G = (g_1, g_2)$ are described below:
\begin{enumerate}
	\item The image $\mathbf{y}$ is divided into $\lfloor W/k \rfloor \times \lfloor H/k \rfloor$ cells with each of size $k \times k$. Empirically, we set $k = 2$.
	\item For the $i$-th row and $j$-th column cell, two neighboring locations are randomly selected. They are respectively taken as the $(i,j)$-th element of the sub-sampler $G = (g_1, g_2)$.
	\item For all the $\lfloor W/k \rfloor \times \lfloor H/k \rfloor$ cells, repeat step 2. Then the neighbor sub-sampler $G=(g_1, g_2)$ is generated. 
	Given the image $\mathbf{y}$, two sub-sampled images $(g_1(\mathbf{y}), g_2(\mathbf{y}))$ with size $\lfloor W/k \rfloor \times \lfloor H/k \rfloor$ are derived.
\end{enumerate}

In this way, we can use different random neighbor sub-samplers to generated noisy training image pairs from single noisy image.
The ground-truths of the paired images $(g_1(\mathbf{y}), g_2(\mathbf{y}))$ are similar, since paired pixels of $(g_1(\mathbf{y}), g_2(\mathbf{y}))$ are neighbors in the original noisy image $\mathbf{y}$.
Moreover, the requirement of independence of $(g_1(\mathbf{y}), g_2(\mathbf{y}))$ given $\mathbf{x}$ is satisfied, if we further assume that the noisy image $\mathbf{y}$ is conditionally pixel-wise independent given the ground-truth $\mathbf{x}$.

\subsection{Self-Supervised Training with a Regularizer} \label{subsec::self-supervised_training}

Since we have generated training image pairs from single noisy images with our proposed neighbor sub-samplers in Section \ref{subsec::noisypairgenerate}, here we will introduce our self-supervised training strategy with noisy training inputs and targets.

Given a pair of sub-sampled images $(g_1(\mathbf{y}), g_2(\mathbf{y}))$ from a noisy image $\mathbf{y}$, we use the following regularized loss developed in Section \ref{subsec::n2n_extension} to train the denoising network:
\begin{align}\label{equ::psen2n_rerm}
 \mathcal{L} & = \mathcal{L}_{rec} + \gamma \cdot \mathcal{L}_{reg} \nonumber\\
& = \left\lVert f_\theta(g_1(\mathbf{y})) - g_2(\mathbf{y})\right\rVert_2^2 \\
& + \gamma \cdot \left\lVert f_\theta(g_1(\mathbf{y})) - g_2(\mathbf{y}) - (g_1(f_\theta(\mathbf{y})) - g_2(f_\theta(\mathbf{y}))) \right\rVert_2^2, \nonumber	
\end{align}
where $f_\theta$ is a denoising network with arbitrary network design, and $\gamma$ is a hyper-parameter controlling the strength of the regularization term.
To stabilize learning, we stop the gradients of $g_1(f_\theta(\mathbf{y}))$ and $g_2(f_\theta(\mathbf{y}))$, and gradually increase $\gamma$ to the specified value in the training process.
The training framework is shown in Algorithm \ref{alg::training}.
\vspace{-10pt}
\begin{algorithm}
	\SetAlgoLined
	\KwIn{A set of noisy images $Y = \{\mathbf{y}_i \}_{i=1}^n$ ; \newline
		Denoising network $f_\theta$; \newline
		Hyper-parameter $\gamma$.}
	\While{not converged }{
		Sample a noisy image $\mathbf{y} \in Y$\;
		Generate a random neighbor sub-sampler $G=(g_1, g_2)$\;
		Derive a pair of sub-sampled images $(g_1(\mathbf{y}), g_2(\mathbf{y}))$, where $g_1(\mathbf{y})$ is the network input, and $g_2(\mathbf{y})$ is the network target\;       
		For the network input $g_1(\mathbf{y})$, derive the denoised image $f_\theta(g_1(\mathbf{y}))$\;
		Calculate $\mathcal{L}_{rec} =  \lVert f_\theta(g_1(\mathbf{y})) - g_2(\mathbf{y})\rVert^2$\;
		For the original noisy image $\mathbf{y}$, derive the denoised image $f_\theta(\mathbf{y})$ with no gradients\;
		Use the same neighbor sub-sampler $G$ to derive the pair $(g_1(f_\theta(\mathbf{y})), g_2(f_\theta(\mathbf{y})))$\;
		Calculate $\mathcal{L}_{reg} =\lVert f_\theta(g_1(\mathbf{y})) - g_2(\mathbf{y}) - (g_1(f_\theta(\mathbf{y})) - g_2(f_\theta(\mathbf{y}))) \rVert^2$\;
		Update the denoising network $f_\theta$ by minimizing the objective $ \mathcal{L}_{rec} + \gamma \cdot \mathcal{L}_{reg}$. 
	}
	\caption{Neighbor2Neighbor}
	\label{alg::training}
\end{algorithm}

\vspace{-10pt}

\section{Experiments} \label{sec::experiments}

In this section, we first describe the implementation details.
Then, to evaluate the effectiveness, the proposed method is compared with state-of-the-art denoising methods trained with single noisy images for 1) synthetic Gaussian or Poisson denoising in sRGB space, and 2) real-world noisy image denoising in raw-RGB space.
Furthermore, ablation studies are conducted to analyze the effectiveness of the proposed sub-sampler and regularizer.

\subsection{Implementation Details} \label{subsec::implementation-details}

\noindent\textbf{Training Details.} For better comparisons, we follow \cite{laine2019high} and use a modified version of U-Net \cite{geronneberr2015u} architecture with three $1 \times 1$ convolution layers added at the end of the network.
We use a batch size of 4 for training and use Adam optimizer with an initial learning rate of $0.0003$ for synthetic denoising experiments in sRGB space and $0.0001$ for real-world denoising experiments in raw-RGB space. The number of training epochs is 100, and the learning rate is half-decayed per 20 epochs.
As for the hyper-parameter $\gamma$ used to control the strength of the regularization term, we set $\gamma = 2$ in the synthetic experiments and $\gamma = 1$ in the real-world experiments empirically.
All experiments are conducted on a server with Python 3.6.4, PyTorch 1.3 \cite{paszke2019pytorch} and Nvidia Tesla V100 GPUs. 
Code is available at {\urlstyle{same} \url{https://github.com/TaoHuang2018/Neighbor2Neighbor}}.
We also implement our algorithm on Mindspore\footnote{\urlstyle{same} \url{https://www.mindspore.cn/}}.

\noindent\textbf{Datasets for Synthetic Experiments.} 
For synthetic experiments in sRGB space, we use 50k images from ImageNet \cite{deng2009imagenet} validation dataset as the source of clean images. Similar to \cite{laine2019high}, we only select clean images whose sizes are between $256 \times 256$ and $512 \times 512$ pixels, and then we randomly crop $256 \times 256$ patches for training.
We consider the following four types of synthetic noise distributions:
(1) Gaussian noise with a fixed level $\sigma = 25$, (2) Gaussian noise with varied noise levels $\sigma \in [5, 50]$, (3) Poisson noise with a fixed level $\lambda = 30$, and (4) Poisson noise with varied noise levels $\lambda \in [5, 50]$.
It should be mentioned that these $\sigma$ values correspond to image color intensities in $[0, 255]$, while these $\lambda$ values correspond to the intensities in $[0, 1]$.
We use Kodak \cite{franzen1999kodak}, BSD300 \cite{martin2001database}, and Set14 \cite{zeyde2010single} image sets for testing. 

\noindent\textbf{Datasets for Real-World Experiments.}
For real-world experiments in raw-RGB space, we consider SIDD \cite{abdelhamed2018high} dataset, a real-world denoising
benchmark,  which is collected using five smartphone cameras under 10 static scenes.
We use SIDD Medium Dataset in RAW format for training, and use SIDD Validation and Benchmark Datasets for validation and testing respectively.

\subsection{Comparisons with State-of-the-Arts} \label{subsec::exp}

\noindent\textbf{Compared Methods.} 
We compare our Neighbor2Neighbor against two baseline methods (supervised denoising (N2C) and Noise2Noise (N2N) \cite{lehtinen2018noise2noise}), one traditional denoiser (BM3D \cite{dabov2007image}), and six self-supervised denoisers (Deep Image Prior (DIP) \cite{ulyanov2018deep}, Noise2Void (N2V) \cite{krull2019noise2void}, Self2Self \cite{quan2020self2self}, Laine19 \cite{laine2019high}, Noisier2Noise \cite{moran2020noisier2noise} and DBSN \cite{wu2020unpaired}).

\noindent\textbf{Details of Synthetic Experiments.}
In these experiments, 
1) for the two baseline methods (N2C and N2N) and Laine19, we use the pre-trained network weights provided in \cite{laine2019high}, to keep the same network architecture as ours; 
2) for BM3D, we use CBM3D with the parameter $\sigma$ estimated by the method in \cite{chen2015efficient} to denoise Gaussian noise, and use Anscombe transform \cite{makitalo2010optimal} to denoise Poisson noise;
3) for DIP, Self2Self, N2V, and DBSN, we use the authors' implementation, and for Noisier2Noise, we re-implement the authors' design with $\alpha=1$. 
Besides, Laine19 with probabilistic post-processing (posterior mean estimation) is denoted as Laine19-pme, and the method without post-processing is denoted as Laine19-mu.

\begin{table} 
	\setlength\tabcolsep{2.5pt}
	\begin{center}
		\footnotesize
		\begin{tabular}{clccc}
			\toprule
			Noise Type & Method & KODAK & BSD300 & SET14 \\
			\midrule
			\multirow{10}{*}{\shortstack{Gaussian \\ $\sigma = 25$}} 
			& Baseline, N2C \cite{geronneberr2015u} & 32.43/0.884 & 31.05/0.879 & 31.40/0.869 \\  
			& Baseline, N2N \cite{lehtinen2018noise2noise} & 32.41/0.884 & 31.04/0.878 & 31.37/0.868 \\ \cline{2-5}
			& CBM3D \cite{dabov2007image} & 31.87/0.868 & 30.48/0.861 & 30.88/0.854 \\ 
			& DIP \cite{ulyanov2018deep} & 27.20/0.720 & 26.38/0.708 & 27.16/0.758 \\
			& Self2Self \cite{quan2020self2self} & 31.28/0.864 & 29.86/0.849 & 30.08/0.839 \\  
			& N2V \cite{krull2019noise2void} & 30.32/0.821 & 29.34/0.824 & 28.84/0.802 \\  
			& Laine19-mu \cite{laine2019high} & 30.62/0.840 & 28.62/0.803 & 29.93/0.830 \\  
			& Laine19-pme \cite{laine2019high} & \textbf{32.40}/\textbf{0.883} & \textbf{30.99}/\textbf{0.877} & \textbf{31.36}/\textbf{0.866} \\ 
			& Noisier2Noise \cite{moran2020noisier2noise} & 30.70/0.845 & 29.32/0.833 & 29.64/0.832 \\   
			& DBSN \cite{wu2020unpaired} & 31.64/0.856 & 29.80/0.839 & 30.63/0.846 \\ 
			& Ours & \underline{32.08}/\underline{0.879} & \underline{30.79}/\underline{0.873} & \underline{31.09}/\underline{0.864} \\
			\midrule
			\multirow{9}{*}{\shortstack{Gaussian \\ $\sigma \in [5, 50]$}} 
			& Baseline, N2C \cite{geronneberr2015u} & 32.51/0.875 & 31.07/0.866 & 31.41/0.863 \\  
			& Baseline, N2N \cite{lehtinen2018noise2noise} & 32.50/0.875 & 31.07/0.866 & 31.39/0.863 \\ \cline{2-5}  
			& CBM3D \cite{dabov2007image} & 32.02/\underline{0.860} & 30.56/\underline{0.847} & 30.94/0.849 \\  
			& DIP \cite{ulyanov2018deep} & 26.97/0.713 & 25.89/0.687 & 26.61/0.738 \\ 
			& Self2Self \cite{quan2020self2self} & 31.37/0.860 & 29.87/0.841 & 29.97/0.849 \\  
			& N2V \cite{krull2019noise2void} & 30.44/0.806 & 29.31/0.801 & 29.01/0.792 \\ 
			& Laine19-mu \cite{laine2019high} & 30.52/0.833 & 28.43/0.794 & 29.71/0.822 \\ 
			& Laine19-pme \cite{laine2019high} & \textbf{32.40}/\textbf{0.870} & \textbf{30.95}/\textbf{0.861} & \textbf{31.21}/\underline{0.855} \\  
			& DBSN \cite{wu2020unpaired} & 30.38/0.826 & 28.34/0.788 & 29.49/0.814 \\  
			& Ours & \underline{32.10}/\textbf{0.870} & \underline{30.73}/\textbf{0.861} & \underline{31.05}/\textbf{0.858}\\
			\midrule
			\multirow{9}{*}{\shortstack{Poisson \\ $\lambda = 30$}} 
			& Baseline, N2C \cite{geronneberr2015u} & 31.78/0.876 & 30.36/0.868 & 30.57/0.858 \\  
			& Baseline, N2N \cite{lehtinen2018noise2noise} & 31.77/0.876 & 30.35/0.868 & 30.56/0.857 \\   \cline{2-5}
			& Anscombe \cite{makitalo2010optimal} & 30.53/0.856 & 29.18/0.842 & 29.44/0.837 \\   
			& DIP \cite{ulyanov2018deep} & 27.01/0.716 & 26.07/0.698 & 26.58/0.739 \\ 
			& Self2Self \cite{quan2020self2self} & 30.31/0.857 & 28.93/0.840 & 28.84/0.839 \\  
			& N2V \cite{krull2019noise2void} & 28.90/0.788 & 28.46/0.798 & 27.73/0.774 \\  
			& Laine19-mu \cite{laine2019high} & 30.19/0.833 & 28.25/0.794 & 29.35/0.820 \\ 
			& Laine19-pme \cite{laine2019high} & \textbf{31.67}/\textbf{0.874} & \textbf{30.25}/\textbf{0.866} & \textbf{30.47}/\textbf{0.855} \\  
			& DBSN \cite{wu2020unpaired} & 30.07/0.827 & 28.19/0.790 & 29.16/0.814 \\ 
			& Ours & \underline{31.44}/\underline{0.870} & \underline{30.10}/\underline{0.863} & \underline{30.29}/\underline{0.853} \\
			\hline
			\multirow{9}{*}{\shortstack{Poisson \\ $\lambda \in [5, 50]$}} 
			& Baseline, N2C \cite{geronneberr2015u} & 31.19/0.861 & 29.79/0.848 & 30.02/0.842 \\  
			& Baseline, N2N \cite{lehtinen2018noise2noise} & 31.18/0.861 & 29.78/0.848 & 30.02/0.842 \\    \cline{2-5}
			& Anscombe \cite{makitalo2010optimal} & 29.40/0.836 & 28.22/0.815  & 28.51/\underline{0.817}  \\  
			& DIP \cite{ulyanov2018deep} & 26.56/0.710 & 25.44/0.671 & 25.72/0.683 \\  
			& Self2Self \cite{quan2020self2self} & 29.06/0.834 & 28.15/0.817 & 28.83/0.841 \\  
			& N2V \cite{krull2019noise2void} & 28.78/0.758 & 27.92/0.766 & 27.43/0.745 \\  
			& Laine19-mu \cite{laine2019high} & 29.76/0.820 & 27.89/0.778 & \underline{28.94}/0.808 \\  
			& Laine19-pme \cite{laine2019high} & \textbf{30.88}/\underline{0.850} & \textbf{29.57}/\underline{0.841} & 28.65/0.785 \\  
			& DBSN \cite{wu2020unpaired} & 29.60/0.811 & 27.81/0.771 & 28.72/0.800 \\  
			& Ours & \underline{30.86}/\textbf{0.855} & \underline{29.54}/\textbf{0.843} & \textbf{29.79}/\textbf{0.838} \\
			\bottomrule
		\end{tabular}
	\end{center}
	\footnotesize
	\vspace{-15pt}
	\caption{Quantitative comparison (PSNR(dB)/SSIM) of different methods for Gaussian noise and Poisson noise.
		For each noise type, the highest PSNR(dB)/SSIM among the denoising methods trained without clean images is marked in \textbf{bold} while the second highest is \underline{underlined}.}
	\vspace{-5pt}
	\label{tab::synthetic}
\end{table}

\begin{figure*}
	\centering
	\captionsetup[subfigure]{labelformat=empty}
	\includegraphics[width=1.0\textwidth]{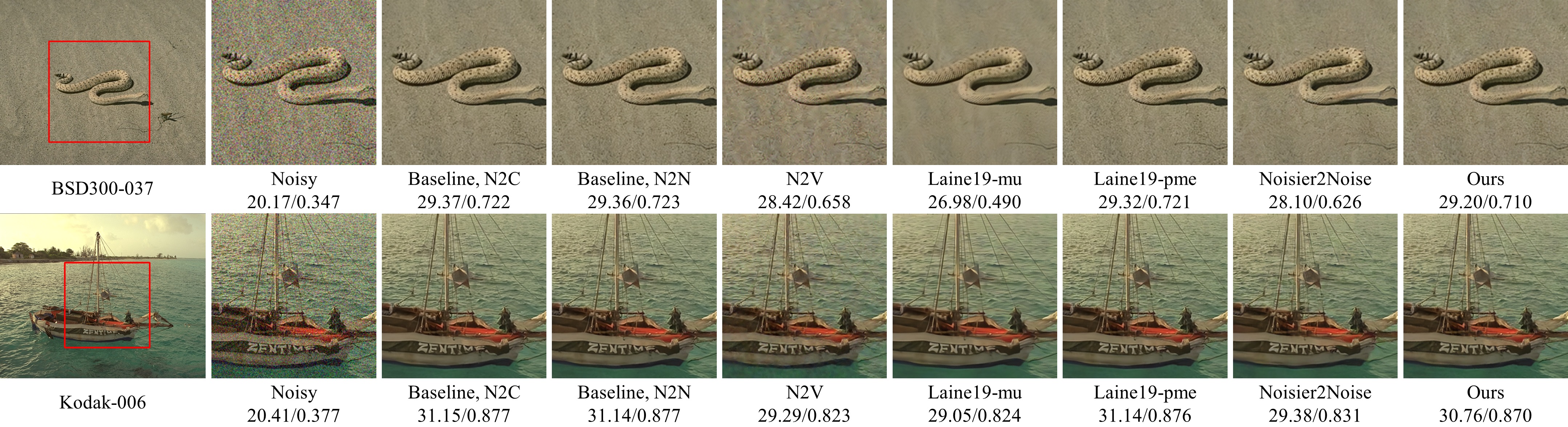}
	\vspace{-20pt}
	\caption{Visual comparison of our method against other competing methods in the setting of Gaussian $\sigma = 25$.
		The quantitative PSNR(dB)/SSIM results are listed underneath the images.
		Best viewed in color.}
	\vspace{-10pt}
	\label{fig::vis-gauss25}
\end{figure*}

\begin{figure*}
	\centering
	\captionsetup[subfigure]{labelformat=empty}
	\includegraphics[width=1.0\textwidth]{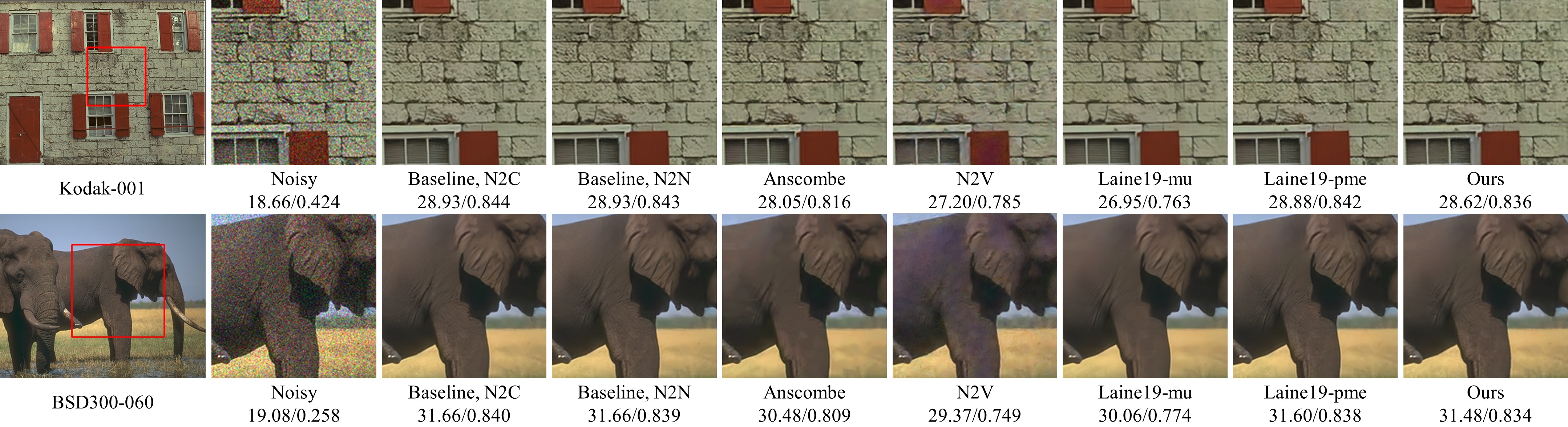}
	\vspace{-20pt}
	\caption{Visual comparison of our method against other competing methods in the setting of Poisson $\lambda = 30$.
		The quantitative PSNR(dB)/SSIM results are listed underneath the images.
		Best viewed in color.}
	\vspace{-10pt}
	\label{fig::vis-poisson30}
\end{figure*}

\noindent\textbf{Results of Synthetic Experiments.} 
The quantitative comparisons including the PSNR and SSIM results on the Gaussian noise and the Poisson noise are shown in Table \ref{tab::synthetic}.
It can be seen that our Neighbor2Neighbor performs not only better than traditional denoisers like BM3D (CBM3D for Gaussian noise and Anscombe transform for Poisson noise), but also better than six self-supervised denoising methods DIP, Self2Self, N2V, Noisier2Noise, Laine19-mu, and DBSN.
This indicates that without the consideration of explicit noise modeling in the synthetic experiments, our method is superior among existing self-supervised denoising frameworks.
Moreover, compared with Laine19-pme which needs to model noise distribution as a prior, our approach also achieves a comparable performance, especially in the cases with varied noise levels.
One possible reason is that it is difficult to estimate the noise parameter $\gamma$ or $\lambda$ for each image in such a case, which harms the performance of probabilistic post-processing.
Therefore, if a noise distribution is complex and unknown, methods with probabilistic post-processing may severely degrade due to the difficulty of noise modeling.
Further real-world experiments show that explicit noise modeling may not be suitable in real-world image denoising.
In Figures \ref{fig::vis-gauss25} and \ref{fig::vis-poisson30}, we provide the qualitative comparisons of the denoised images for Gaussian noise with noise level $\sigma = 25$ and Poisson noise with noise level $\lambda = 30$.
One can see that our method produces competitive visual quality among traditional and self-supervised deep denoisers.

\noindent\textbf{Details of Real-world Experiments.}  
In the experiments with real-world noisy images in raw-RGB space, for all baseline methods and compared methods, we use the authors' implementation and train on the SIDD Medium dataset, except the results of the network trained on only CycleISP-generated synthetic image pairs that are depicted in the supplementary material of \cite{zamir2020cycleisp}.
The denoising details are
1) for BM3D, we split a noisy image into four sub-images according to the Bayer pattern, denoise them individually, and then recombine the denoised sub-images into a single denoised image;
2) for network-based methods, we use the packed 4-channel raw images as the network input and unpack the network output to obtain denoised raw images.
Note that the network architectures of the network-based methods are listed in Table \ref{tab::sidd}.
And Laine19 models parenthesized with `Gaussian' or `Poisson' mean that Gaussian distribution or Poisson distribution is used to model the real-world noise explicitly.

\noindent\textbf{Results of Real-world Experiments.} 
Table \ref{tab::sidd} lists the PSNR/SSIM results of different denoising algorithms on SIDD Validation and SIDD Benchmark Datasets in raw-RGB space.
The results on SIDD Benchmark Dataset are provided by the online server \cite{sidd}. 
Our approach not only performs better than traditional denoising algorithms but also consistently outperforms the self-supervised deep denoisers under the same network architecture. Besides, compared with the RRGs network trained with only CycleISP-generated synthetic image pairs, our approach also performs better.
Notice that the methods with probabilistic post-processing are inferior to the same methods without the post-processing. This is because explicit noise modeling is difficult in real-world photography, and simply modeling the noise distribution with Gaussian/Heteroscedastic Gaussian/Poisson distribution is not enough. Therefore, the methods with this post-processing cannot generalize well to real-world denoising.
Furthermore, the performance of our approach can be further improved if we replace UNet \cite{geronneberr2015u} with a more advanced denoising architecture composed of multiple RRGs in \cite{zamir2020cycleisp}.
This indicates that the proposed Neighbor2Neighbor framework is able to enjoy the progress of state-of-the-art  image denoising networks in network architecture design.
Figure \ref{fig::vis-sidd_benchmark} presents some visual comparisons of our model against other methods, showing the effectiveness of our approach.

\begin{table}
	\setlength\tabcolsep{3pt}
	\begin{center}
		\footnotesize
		\begin{tabular}{llcc}
			\toprule
			\multirow{2}{*}{Methods} & \multirow{2}{*}{Network} & \multicolumn{1}{c}{SIDD} & \multicolumn{1}{c}{SIDD}  \\
			&  & \multicolumn{1}{c}{Benchmark} & \multicolumn{1}{c}{Validation} \\ 
			\midrule
			Baseline, N2C \cite{geronneberr2015u} & U-Net \cite{geronneberr2015u} & 50.60/0.991 & 51.19/0.991 \\ 
			Baseline, N2N \cite{lehtinen2018noise2noise} & U-Net \cite{geronneberr2015u}  & 50.62/0.991 & 51.21/0.991 \\ 
			\midrule
			BM3D \cite{dabov2007image} & - & 48.60/0.986 & 48.92/0.986 \\ 
			BM3D$^*$ \cite{dabov2007image} & - & 45.52/0.980 & \;\;  -  \;\;  / \;\;  -  \;\; \\ 
			N2V \cite{krull2019noise2void} & U-Net \cite{geronneberr2015u}  & 48.01/0.983 & 48.55/0.984 \\  
			Laine19-mu (Gaussian) \cite{laine2019high} & U-Net \cite{geronneberr2015u}  & 49.82/0.989 & 50.44/0.990 \\ 
			Laine19-pme (Gaussian) \cite{laine2019high} & U-Net \cite{geronneberr2015u}  & 42.17/0.935 & 42.87/0.939 \\ 
			Laine19-mu (Poisson) \cite{laine2019high} & U-Net \cite{geronneberr2015u}  & 50.28/0.989 & 50.89/0.990 \\ 
			Laine19-pme (Poisson) \cite{laine2019high} & U-Net \cite{geronneberr2015u}  & 48.46/0.984 & 48.98/0.985 \\ 
			DBSN \cite{wu2020unpaired} & DBSN \cite{wu2020unpaired}  & 49.56/0.987 & 50.13/0.988 \\ 
			CycleISP (synthetic) \cite{zamir2020cycleisp}  & RRGs \cite{zamir2020cycleisp} & \;\;  -  \;\;  / \;\;  -  \;\; & 50.45/ \;\;  -  \;\;  \\  
			Ours  & U-Net \cite{geronneberr2015u} & \textbf{50.47/0.990} & \textbf{51.06/0.991} \\ 
			Ours  & RRGs \cite{zamir2020cycleisp}  & \textbf{50.76/0.991}  & \textbf{51.39/0.991} \\ 
			\bottomrule
		\end{tabular}
	\end{center}
	\vspace{-12pt}
	\footnotesize
	\caption{Quantitative comparisons (PSNR(dB)/SSIM) on SIDD benchmark and validation datasets in raw-RGB space.
		The best PSNR(dB)/SSIM results among denoising methods without the need for clean images are marked in \textbf{bold}.
		$*$ denotes that results are obtained from the website of SIDD Benchmark \cite{sidd}. }
	\vspace{-12pt}
	\label{tab::sidd}
\end{table}

\begin{figure*}
	\centering
	\captionsetup[subfigure]{labelformat=empty}
	\subfloat[][] {
		\includegraphics[width=1.0\linewidth]{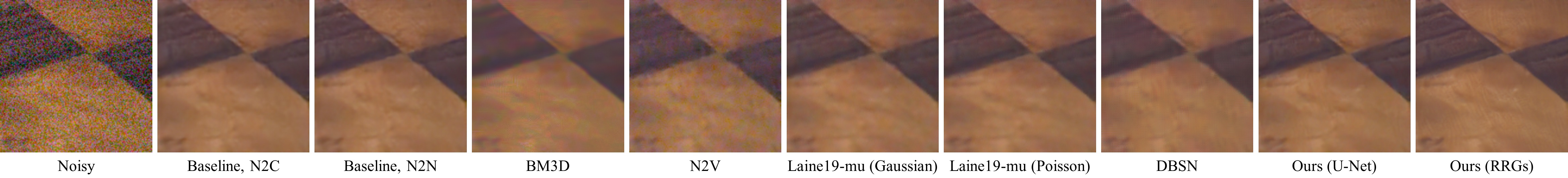}
	} \\
	\vspace{-20pt}
	\subfloat[][] {
		\includegraphics[width=1.0\linewidth]{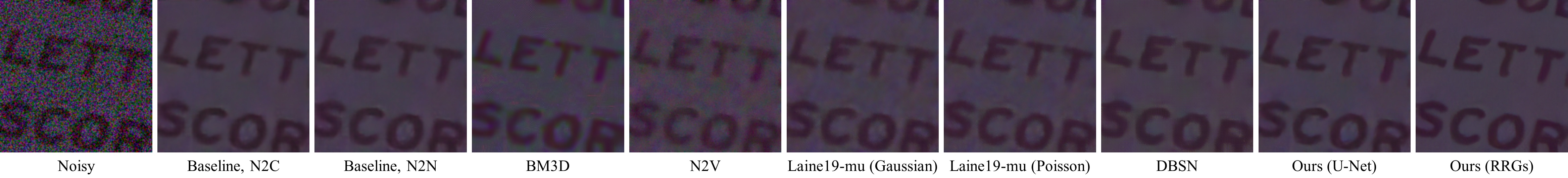}
	}
	\vspace{-20pt}
	\caption{Visual comparison of our method against other methods on SIDD Benchmark.
		All images are converted from raw-RGB space to sRGB space by the ISP provided by SIDD\protect\footnotemark ~for visualization.
		Best viewed in color.}
	\vspace{-10pt}
	\label{fig::vis-sidd_benchmark}
\end{figure*}

\subsection{Ablation Study} \label{subsec::ablation_study}

Here, we conduct ablation studies to analyze the influence of the regularization term and the sampling strategy.

\noindent\textbf{Influence of Regularization Term.}
The hyper-parameter $\gamma$ in Equation \eqref{equ::psen2n_rerm} is used to control the strength of the regularization term. Table \ref{tab::ablation_gamma} lists the performance of Neighbor2Neighbor under different $\gamma$ values on Kodak dataset. 
From Table \ref{tab::ablation_gamma} and Figure \ref{fig::vis-ablation_gamma}, we have the following observations: 
1) When $\gamma = 0$, that is, the regularization term is removed, the denoising performances severely suffer from the gap between the underlying ground truths of the sub-sampled paired images. 
The corresponding denoised image is over-smoothing and lacks of detailed information. 
2) As $\gamma$ increases, the denoised image becomes sharper, and 
3) When $\gamma$ is too large, the regularization term dominates the loss. The denoising performance becomes worse as a great deal of noise remains. 
To this end, the regularization term acts as a controller between smoothness and noisiness. A moderate $\gamma$ value is selected to obtain both clean and sharp results. 
In this paper, we use $\gamma=2$ for synthetic experiments, and $\gamma=1$ for real-world experiments.
\begin{table} 
	\setlength\tabcolsep{2pt}
	\begin{center}
		\footnotesize
		\begin{tabular}{lcccc}
			\toprule
			Noise Type & $\gamma=0$ & $\gamma=2$ & $\gamma=8$ & $\gamma=20$ \\
			\midrule
			Gaussian $\sigma = 25$ & 31.77/0.874 & \textbf{32.08}/\textbf{0.879}  & 32.02/0.878 & 31.95/0.874 \\			
			Gaussian $\sigma \in [5, 50]$ & 31.67/0.866 & \textbf{32.10}/\textbf{0.870}  & 31.99/0.865 & 31.87/0.861 \\
			Poisson $\lambda = 30$ & 31.21/0.866 & \textbf{31.44}/\textbf{0.870}  & 31.38/\textbf{0.870} & 31.21/0.864 \\
			Poisson $\lambda \in [5, 50]$ & 30.67/0.853 & \textbf{30.86}/\textbf{0.855}  & 30.74/0.851 & 30.58/0.846 \\
			\bottomrule
		\end{tabular}
	\end{center}
	\footnotesize
	\vspace{-12pt}
	\caption{Ablation on different weights ($\gamma$ values) of the regularizer. PSNR (dB) and SSIM results are evaluated on the Kodak dataset.}
	\label{tab::ablation_gamma}
\end{table}
\begin{figure}
	\centering
	\includegraphics[width=1.0\linewidth]{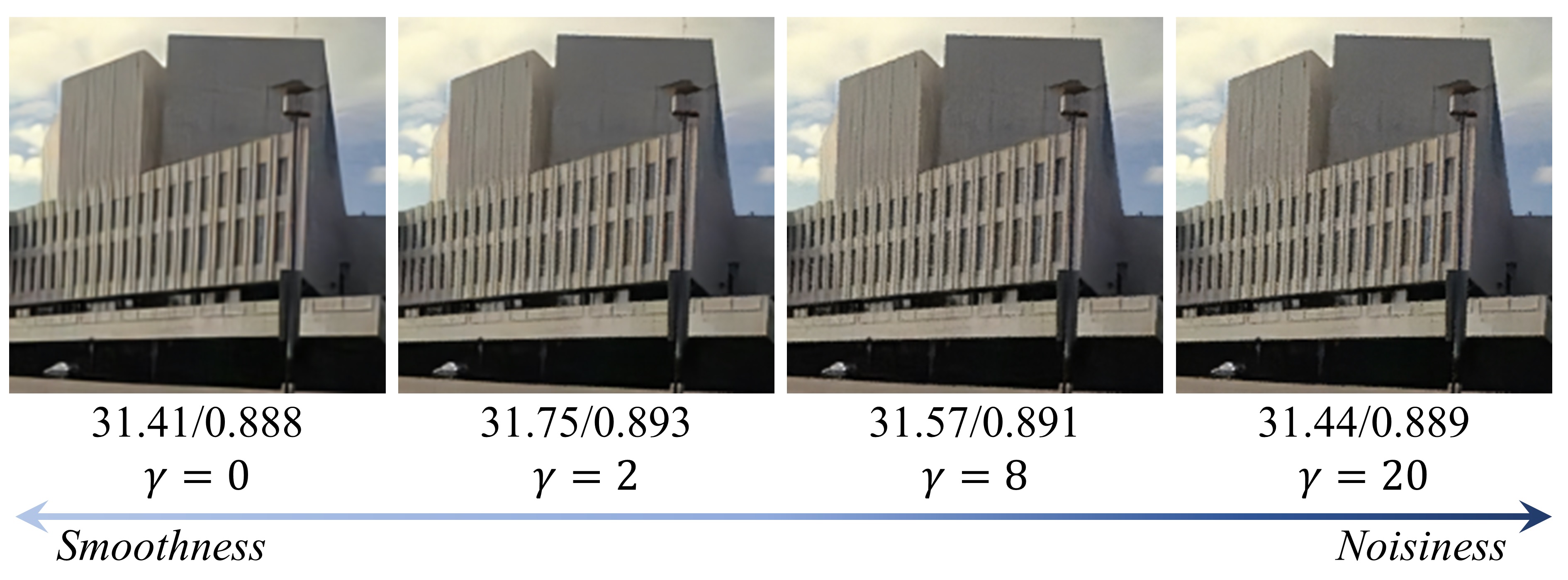}
	\caption{Visual comparison of Neighbor2Neighbor under different $\gamma$ values in the setting of Gaussian $\sigma=25$.
		The quantitative PSNR(dB)/SSIM results are listed underneath the images.
		Best viewed in color.}
	\label{fig::vis-ablation_gamma}
\end{figure}

\footnotetext{\urlstyle{same} \url{https://github.com/AbdoKamel/simple-camera-pipeline}}

\noindent\textbf{Influence of Sampling Strategy.}
To evaluate the effectiveness of our sampling strategy, we compare our random neighbor sub-sampler in Section \ref{subsec::noisypairgenerate} to a naive fix-location sub-sampler (fix-location), where $k^2$ sub-sampled images from one noisy image are generated.
In each sub-sampled image, all pixels are from the same location of all the $k \times k$ cells. 
Then two sub-sampled images are randomly selected from these $k^2$ images. 
Details are provided in the supplementary materials.
See Table \ref{tab::ablation_sampling} for the quantitative comparison with $\gamma = 2$.
\begin{table} 
	\setlength\tabcolsep{4pt}
	\begin{center}
		\footnotesize
		\begin{tabular}{clcc}
			\toprule
			Dataset & Noise Type & Fix-location & Random \\
			\midrule
			\multirow{4}{*}{Kodak} & Gaussian $\sigma = 25$ & 32.05/0.876 & \textbf{32.08/0.879}	\\
			& Gaussian $\sigma \in [5, 50]$ & 32.07/0.867 & \textbf{32.10/0.870} \\
			& Poisson $\lambda = 30$ & 31.42/0.869  & \textbf{31.44/0.870} \\
			& Poisson $\lambda \in [5, 50]$ & 30.83/0.853 & \textbf{30.86/0.855}	\\
			\midrule
			SIDD Validation & real-world noise & 50.92/\textbf{0.991} & \textbf{51.06/0.991} \\
			\bottomrule
		\end{tabular}
	\end{center}
	\footnotesize
	\caption{Ablation on different sampling strategies. PSNR (dB) and SSIM results are evaluated on the Kodak dataset and the SIDD validation dataset.}
	\label{tab::ablation_sampling}
	
\end{table}
The comparison of `Random' v.s. `Fix-location' shows the important role of randomness in the sampling strategy. The fix-location sub-sampler is a special case of our method with much less randomness, and therefore the proposed random neighbor sub-sampler achieves better performance.

\section{Conclusion} \label{sec::conclusion}

We propose Neighbor2Neighbor, a novel self-supervised framework for image denoising which puts an end to the need for noisy-clean pairs, multiple noisy observations, blind-spot networks and explicit noise modeling.
Our approach successfully solves single images denoising by generating sub-sampled paired images with random neighbor sub-samplers as training image pairs and using the self-supervised training scheme with a regularization term.
The extensive experiments have shown the effectiveness and superiority of the proposed Neighbor2Neighbor over existing methods.  
For future work, we would like to extend the proposed method to the case of spatially-correlated noise and extremely dark images.

{\small
\bibliographystyle{ieee_fullname}
\bibliography{egbib}
}

\newpage
\quad
\newpage

\appendix
\setcounter{table}{0}  
\setcounter{figure}{0}  
\setcounter{theorem}{0}  
\renewcommand{\thesection}{\Alph{section}}
\renewcommand\thefigure{A.\arabic{figure}} 

\section{Proof of Theorem \ref*{theorem::epsilon}} \label{sup::proofs}

\begin{theorem}
	Let $\mathbf{y}$ and $\mathbf{z}$ be two independent noisy images conditioned on $\mathbf{x}$, and assume that there exists an $\bm{\varepsilon} \neq \mathbf{0}$ such that $\mathbb{E}_{\mathbf{y}|\mathbf{x}}(\mathbf{y})=\mathbf{x}$ and $\mathbb{E}_{\mathbf{z}|\mathbf{x}}(\mathbf{z}) = \mathbf{x}+\bm{\varepsilon}$. 
	Let the variance of $\mathbf{z}$ be $\bm{\sigma}_\mathbf{z}^2$.
	Then it holds that
	\begin{equation} 
		\begin{aligned}
			\mathbb{E}_{\mathbf{x}, \mathbf{y}} \left\lVert f_{\theta}(\mathbf{y})-\mathbf{x} \right\rVert_2^2
			&= \mathbb{E}_{\mathbf{x}, \mathbf{y}, \mathbf{z}} \left\lVert f_{\theta}(\mathbf{y})-\mathbf{z} \right\rVert_2^2-\bm{\sigma}_\mathbf{z}^2 \\
			&+ 2\bm{\varepsilon} \mathbb{E}_{\mathbf{x}, \mathbf{y}}(f_{\theta}(\mathbf{y})-\mathbf{x}). 		
		\end{aligned}
	\end{equation}
\end{theorem}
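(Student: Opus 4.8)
The plan is to expand the squared norm inside $\mathbb{E}_{\mathbf{x},\mathbf{y},\mathbf{z}}\lVert f_\theta(\mathbf{y}) - \mathbf{z}\rVert_2^2$ around the clean image $\mathbf{x}$ and then evaluate the resulting pieces by iterated expectation, exploiting that $\mathbf{y}$ and $\mathbf{z}$ are independent conditioned on $\mathbf{x}$. First I would insert and subtract $\mathbf{x}$ to write
\begin{equation}
\left\lVert f_{\theta}(\mathbf{y}) - \mathbf{z}\right\rVert_2^2 = \left\lVert f_{\theta}(\mathbf{y}) - \mathbf{x}\right\rVert_2^2 + 2\left(f_{\theta}(\mathbf{y}) - \mathbf{x}\right)\left(\mathbf{x} - \mathbf{z}\right) + \left\lVert \mathbf{z} - \mathbf{x}\right\rVert_2^2,
\end{equation}
and then apply $\mathbb{E}_{\mathbf{x},\mathbf{y},\mathbf{z}} = \mathbb{E}_{\mathbf{x}}\,\mathbb{E}_{\mathbf{y}|\mathbf{x}}\,\mathbb{E}_{\mathbf{z}|\mathbf{x}}$ term by term, which is legitimate because $f_{\theta}(\mathbf{y})$ is a function of $\mathbf{y}$ alone.

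The three terms separate cleanly. The first does not depend on $\mathbf{z}$, so it contributes exactly $\mathbb{E}_{\mathbf{x},\mathbf{y}}\lVert f_{\theta}(\mathbf{y}) - \mathbf{x}\rVert_2^2$, the supervised $\ell_2$ loss we want to isolate on the left. For the cross term, conditional independence of $\mathbf{y}$ and $\mathbf{z}$ given $\mathbf{x}$ lets me push $\mathbb{E}_{\mathbf{z}|\mathbf{x}}$ onto the factor $\mathbf{x} - \mathbf{z}$ only; since $\mathbb{E}_{\mathbf{z}|\mathbf{x}}(\mathbf{z}) = \mathbf{x} + \bm{\varepsilon}$, that factor has conditional mean $-\bm{\varepsilon}$, so the cross term evaluates to $-2\bm{\varepsilon}\,\mathbb{E}_{\mathbf{x},\mathbf{y}}(f_{\theta}(\mathbf{y}) - \mathbf{x})$. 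The last term $\mathbb{E}_{\mathbf{x},\mathbf{z}}\lVert \mathbf{z} - \mathbf{x}\rVert_2^2$ is a constant in $\theta$ and is identified with $\bm{\sigma}_{\mathbf{z}}^2$. Collecting the three contributions and rearranging then gives Equation \eqref{equ::n2n_epsilon}.

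The one genuinely substantive ingredient is the conditional-independence factorization of the cross term; everything else is the expansion of a square and linearity of expectation, so the step I would be most careful to state explicitly is the measurability of $f_{\theta}(\mathbf{y})$ with respect to $\mathbf{y}$, which is what makes the term-by-term use of the tower property valid. The secondary point to watch — and the only place a stray term could sneak in — is the bookkeeping of the $\mathbf{z}$-moment: if one instead centers $\mathbf{z}$ at its conditional mean $\mathbf{x} + \bm{\varepsilon}$, a residual $\lVert\bm{\varepsilon}\rVert_2^2$ appears, so the statement must be read with $\bm{\sigma}_{\mathbf{z}}^2$ denoting the second moment of $\mathbf{z}$ about $\mathbf{x}$, which coincides with the ordinary variance precisely when $\bm{\varepsilon} = \mathbf{0}$ and recovers the original Noise2Noise identity in that case.
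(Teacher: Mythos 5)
Your proposal is correct and follows essentially the same route as the paper's proof: both expand the square by inserting $\mathbf{x}$ between $f_{\theta}(\mathbf{y})$ and $\mathbf{z}$, kill the cross term via the conditional independence of $\mathbf{y}$ and $\mathbf{z}$ given $\mathbf{x}$ together with $\mathbb{E}_{\mathbf{z}|\mathbf{x}}(\mathbf{z}-\mathbf{x})=\bm{\varepsilon}$, and finish with the tower property $\mathbb{E}_{\mathbf{x},\mathbf{y}}=\mathbb{E}_{\mathbf{x}}\mathbb{E}_{\mathbf{y}|\mathbf{x}}$ (the paper merely runs the identity in the opposite direction, starting from the supervised loss and adding and subtracting $\mathbf{z}$, which costs it one extra re-decomposition of the cross term). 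Your closing remark that $\bm{\sigma}_{\mathbf{z}}^2$ must be read as $\mathbb{E}_{\mathbf{z}|\mathbf{x}}\left\lVert \mathbf{z}-\mathbf{x}\right\rVert_2^2$, the second moment about $\mathbf{x}$ rather than the variance about the conditional mean $\mathbf{x}+\bm{\varepsilon}$, is exactly the convention the paper uses implicitly, so your bookkeeping matches theirs.
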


\begin{proof}
	First, similar to the derivation in Section 2 of the supplementary materials of \cite{zhussip2019extending}, we have
	\begin{equation}
		\begin{aligned}
			\mathbb{E}_{\mathbf{y}|\mathbf{x}} \left\lVert f_{\theta}(\mathbf{y})-\mathbf{x} \right\rVert_2^2
			& = \mathbb{E}_{\mathbf{y},\mathbf{z}|\mathbf{x}}  \left\lVert f_{\theta}(\mathbf{y})-\mathbf{z}+\mathbf{z}-\mathbf{x} \right\rVert_2^2 \\
			& = \mathbb{E}_{\mathbf{y}, \mathbf{z}|\mathbf{x}} \left\lVert f_{\theta}(\mathbf{y})-\mathbf{z} \right\rVert_2^2 + \mathbb{E}_{\mathbf{z}|\mathbf{x}} \left\lVert \mathbf{z}-\mathbf{x} \right\rVert_2^2 \\
			& + 2\mathbb{E}_{\mathbf{y},\mathbf{z}|\mathbf{x}}(f_{\theta}(\mathbf{y})-\mathbf{z})^{\top}(\mathbf{z}-\mathbf{x})\\
			& = \mathbb{E}_{\mathbf{y}, \mathbf{z}|\mathbf{x}} \left\lVert f_{\theta}(\mathbf{y})-\mathbf{z} \right\rVert_2^2 + \bm{\sigma}_\mathbf{z}^2 \\
			& + 2\mathbb{E}_{\mathbf{y},\mathbf{z}|\mathbf{x}}(f_{\theta}(\mathbf{y})-\mathbf{x}+\mathbf{x}-\mathbf{z})^{\top}(\mathbf{z}-\mathbf{x})\\
			& = \mathbb{E}_{\mathbf{y}, \mathbf{z}|\mathbf{x}} \left\lVert f_{\theta}(\mathbf{y})-\mathbf{z} \right\rVert_2^2 + \bm{\sigma}_\mathbf{z}^2 \\
			& + 2\mathbb{E}_{\mathbf{y},\mathbf{z}|\mathbf{x}}(f_{\theta}(\mathbf{y})-\mathbf{x})^{\top}(\mathbf{z}-\mathbf{x}) \\
			& + 2\mathbb{E}_{\mathbf{z}|\mathbf{x}}(\mathbf{x}-\mathbf{z})^{\top}(\mathbf{z}-\mathbf{x})\\
			& = \mathbb{E}_{\mathbf{y}, \mathbf{z}|\mathbf{x}} \left\lVert f_{\theta}(\mathbf{y})-\mathbf{z} \right\rVert_2^2 - \bm{\sigma}_\mathbf{z}^2 \\
			& + 2\mathbb{E}_{\mathbf{y},\mathbf{z}|\mathbf{x}}(f_{\theta}(\mathbf{y})-\mathbf{x})^{\top}(\mathbf{z}-\mathbf{x}).
		\end{aligned}		
	\end{equation}
	Due to the independence between $\mathbf{y}$ and $\mathbf{z}$ given $\mathbf{x}$, it holds that
	\begin{equation}
		\begin{aligned}
			\mathbb{E}_{\mathbf{y}|\mathbf{x}} \left\lVert f_{\theta}(\mathbf{y})-\mathbf{x} \right\rVert_2^2 
			&= \mathbb{E}_{\mathbf{y}, \mathbf{z}|\mathbf{x}} \left\lVert f_{\theta}(\mathbf{y})-\mathbf{z} \right\rVert_2^2 - \bm{\sigma}_\mathbf{z}^2 \\
			& + 2\mathbb{E}_{\mathbf{y}|\mathbf{x}}(f_{\theta}(\mathbf{y})-\mathbf{x})^{\top}\mathbb{E}_{\mathbf{z}|\mathbf{x}}(\mathbf{z}-\mathbf{x}) \\
			& = \mathbb{E}_{\mathbf{y}, \mathbf{z}|\mathbf{x}} \left\lVert f_{\theta}(\mathbf{y})-\mathbf{z} \right\rVert_2^2 - \bm{\sigma}_\mathbf{z}^2 \\
			& + 2\bm{\varepsilon}\mathbb{E}_{\mathbf{y}|\mathbf{x}}(f_{\theta}(\mathbf{y})-\mathbf{x}).
		\end{aligned}		
	\end{equation}
	Since $\mathbb{E}_{\mathbf{x}, \mathbf{y}} = \mathbb{E}_{\mathbf{x}} \mathbb{E}_{\mathbf{y}| \mathbf{x}}$, we further have
	\begin{equation}
		\begin{aligned}
			\mathbb{E}_{\mathbf{x}, \mathbf{y}} \left\lVert f_{\theta}(\mathbf{y})-\mathbf{x} \right\rVert_2^2
			& = \mathbb{E}_{\mathbf{x}, \mathbf{y}, \mathbf{z}} \left\lVert f_{\theta}(\mathbf{y})-\mathbf{z} \right\rVert_2^2-\bm{\sigma}_\mathbf{z}^2 \\
			& + 2\bm{\varepsilon} \mathbb{E}_{\mathbf{x}, \mathbf{y}}(f_{\theta}(\mathbf{y})-\mathbf{x}). 
		\end{aligned} 		
	\end{equation}
	
\end{proof}

\newpage

\section{Details of Fix-Location Sampling Strategy}

Here, we give an illustrative example to describe the details of fix-location sampling strategy.
The fix-location sampler randomly generates a pair of sub-sampled images from $k^2$ to-be-chosen sub-sampled images. In each sub-sampled image, all pixels are from the same location of all the $k \times k$ cells. 
In Figure \ref{fig:downsampling_app}, $k = 2$, and locations chosen for four sub-sampled images (red, blue, yellow, and green pixels) are totally the same in each $2 \times 2$ cell on the left.
Consequently, four sub-sampled images are generated, filled in red, blue, yellow, and green in the middle. 
Then, the sub-sampled paired images $(g_1(\mathbf{y}), g_2(\mathbf{y}))$ are randomly selected which is shown as the blue patch and the green patch on the right.

\begin{figure}[h]
	\begin{center}
		\includegraphics[width=\linewidth]{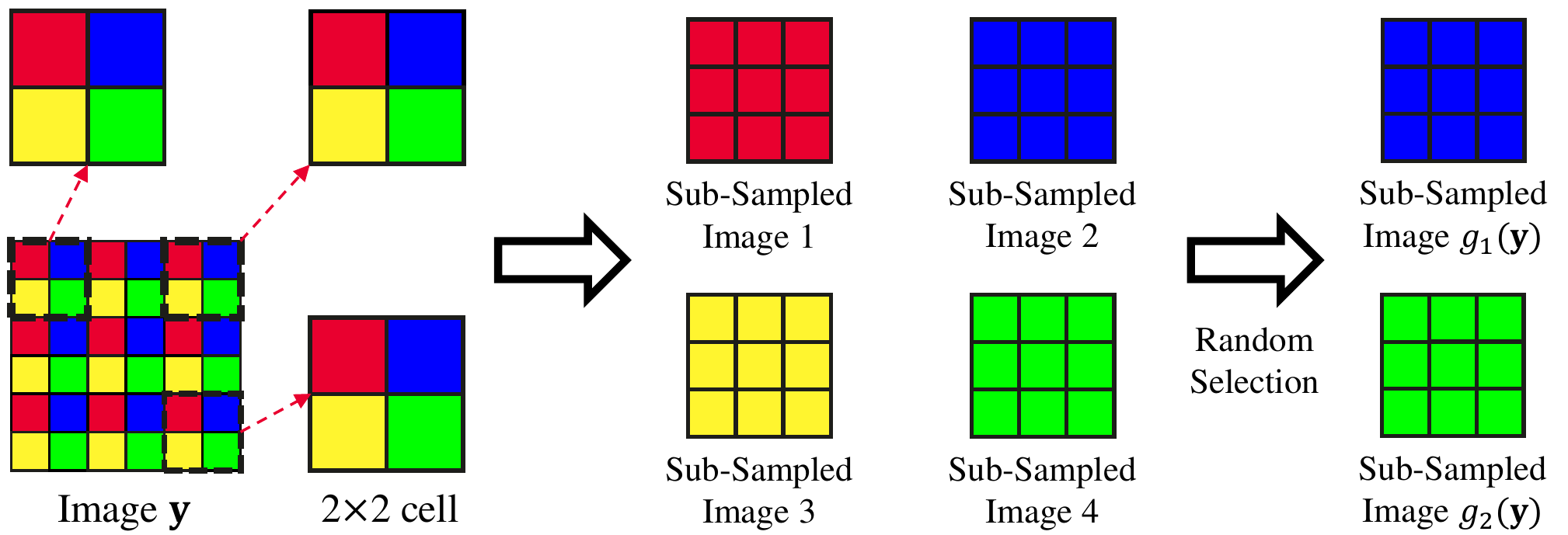}
	\end{center}
	\vspace{-10pt}
	\caption{Example of image pair generation with a fix-location sub-sampler $G=(g_1, g_2)$. 
		Best viewed in color.}
	\vspace{-10pt}
	\label{fig:downsampling_app}
\end{figure}

\newpage

\end{document}